\newcommand{\ket}[1]{\left|#1\right\rangle}
\newcommand{\bra}[1]{\left\langle #1 \right|}
\newcommand{\braopket}[3]{\left\langle #1 \hspace{1mm} \vline \hspace{1mm} #2 \hspace{1mm} \vline \hspace{1mm} #3 \right\rangle}
\newcommand{\uket}[1]{|#1\rangle}
\newcommand{\ubra}[1]{\langle #1 |}
\newcommand{\mket}[1]{ \left \lVert #1 \right \rangle}
\newcommand{\mbra}[1]{\left\langle #1 \right \rVert}
\newcommand{\mbraopket}[3]{\left\langle #1 \hspace{1mm} \vline \hspace{0.5mm} \vline \hspace{1mm} #2 \hspace{1mm} \vline \hspace{0.5mm} \vline \hspace{1mm} #3 \right\rangle}
\newcommand{\op}[1]{\mathbf{#1}}
\definecolor{DarkRed}{rgb}{0.7,0,0}
\newtheorem{theorem}{Theorem}
\DeclareMathOperator{\Tr}{Tr}
\begin{document}

\title{Introduction to Measurement Space and Application to Operationally Useful Entanglement and Mode Entanglement}
\author{Sebastian Meznaric}
\email{s.meznaric1@physics.ox.ac.uk}
\affiliation{Clarendon Laboratory, University of Oxford, Oxford OX1 3PU, United Kingdom}

\date{\today}

\begin{abstract}

% In an experiment, the measurements are rarely rank-1 projectors - they tend to be imperfect. They are best described by a a set of generalized measurement operators. We introduce the idea that the knowable quantum reality depends not only on the state but also on measurements. Mathematically, we map the states from the ordinary Hilbert space into new states in what we call the measurement space.   The state vectors in the measurement space contain only the information accessible with a particular set of measurements. We find the entanglement of the states in the measurement space and find that it has an operational meaning. We prove that with this amount of entanglement and perfect measurements we can run any quantum communication protocol that we can run with the original state and imperfect measurements, with the same success probability. It is, in effect, the operationally useful entanglement. Furthermore, we prove that as long as the measurements are local, the operationally useful entanglement never exceeds the entanglement of the original state. 

We introduce a concept of the measurement space where the information that is not accessible using the particular type of measurements available is erased from the system. Each state from the Hilbert space is thus mapped to its counterpart in the measurement space. We then proceed to compute the entanglement of formation on this new space. We find that for local measurements this never exceeds the entanglement of formation computed on the original state. Finally we proceed to apply the concept to quantum communication protocols where we find that the success probability of the protocol using the state with erased information in combination with perfect measurements is the same as using the non-perfect measurements and the original state. We thus postulate that the so defined entanglement measure quantifies the amount of useful entanglement for quantum communication protocols. 

\end{abstract}

\keywords{measurement space, quantum information, entanglement}

\maketitle

\section{Introduction}

In an experimental setting it is likely never the case that we can achieve perfect rank-1 projective measurements. In such a setting it is impossible for us to obtain all the possible information about the state of the system. In this paper we construct a theoretical framework that erases those properties of the state that cannot affect the outcome of our (imperfect) measurements.

% The idea that the quantum reality is best described by a quantum state in a mathematical Hilbert space is well known. However, in some cases we are more interested in the quantum reality that is knowable when we are restricted to a particular set of measurements. These measurements may not be perfect (we use the generalized measurements) and thus the maximum possible information we may gain about the quantum reality is less than what we could gain otherwise. 

Mathematically, we define a new Hilbert space, called the measurement space. The states in the measurement space will contain only the information that can be extracted using the available measurements. We construct these states through a procedure not unlike the Naimark's dilation theorem, whereby we change the state so that the measurements act just like rank-1 projectors in the measurement space. Intuitively, the properties that are preserved are those that are now accessible only with the perfect measurements (rank-1 projectors).

Mathematically, the map can best be understood as taking all states in the Hilbert space into the form $\sum_j \sqrt{p_j} \mket{\psi_j}$, where $\mket{\psi_j}$ are orthogonal measurement space states corresponding to different measurement outcomes. The double bar in the notation is there to remind us that these states are the measurement space versions thereof. When we erased the non-measurable properties we changed the components of the state that correspond to the measurement outcomes - they were made orthogonal (see figure \ref{fig:MeasurementSpace}). Since the perfect rank-1 projection measurements are best we can get, we effectively transform the state into the form where the best possible measurements cannot extract more information than the chosen generalized measurements in the original Hilbert space. 

\begin{figure*}
 \centering
 \includegraphics[scale=0.4]{./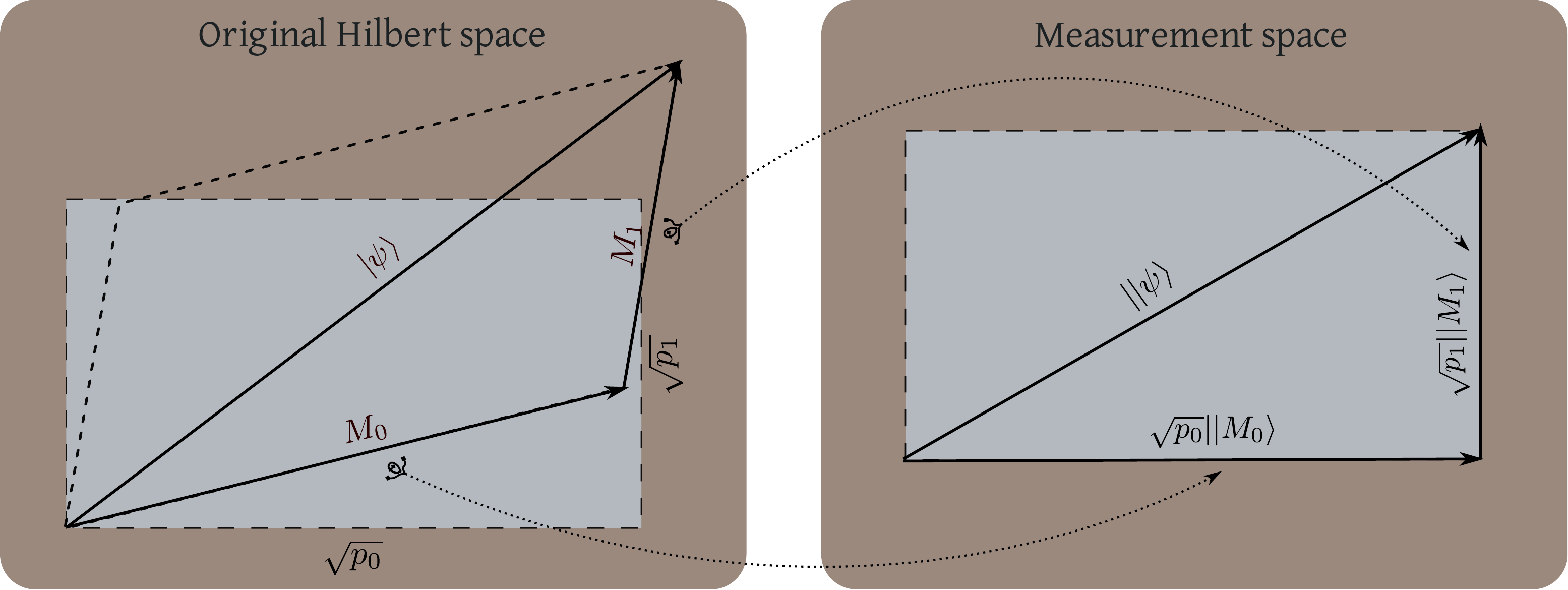}
 % MeasurementDiagram.pdf: 895x337 pixel, 72dpi, 31.57x11.89 cm, bb=0 0 895 337
 \caption{When a state $\ket{\psi}$ is mapped to the measurement space the components corresponding to the measurement outcomes are made orthogonal and their lengths adjusted to the square roots of the respective outcome probabilities.}
 \label{fig:MeasurementSpace}
\end{figure*}

Physically, we can look at the first stage of the measurement process - the interaction between the measurement device and the measured system. We add the measurement apparatus ancilla in some state $\ket{0}$ to the state $\ket{\psi}$ to obtain $\ket{\psi}\ket{0}$. The following unitary $U$ then describes the interaction (for more details see for instance \cite{NielsenChuang, VonNeumann32}):
\begin{align}
 U \ket{\psi} \otimes \ket{0} = \sum_m M_m \ket{\psi} \otimes \ket{m}. \label{eq:VonNeumann}
\end{align}
 The read out of the measurement is now conducted by conducting a measurement on the measurement device. The measurement operators are of the form $\op{1}\otimes\ket{m}\bra{m}$. Notice that if you write $M_m \ket{\psi} \otimes \ket{m}$ as a single state vector, you get exactly the measurement space state. The measurable information in the state (\ref{eq:VonNeumann}) is the same as that in the measurement space.

Next we explore how the quantum information theoretic quantities behave on states where the non-measurable information has been removed. In particular we are interested in entanglement and show that the state without any non-measurable information is just as effective in all quantum communication protocols as the original state. More precisely, using a state with this amount of entanglement and perfect measurement operators for a quantum communication protocol results in the same fidelity as using the imperfect measurements and the original state. 

For some protocols the fidelity is a strictly monotone increasing function of entanglement when perfect rank-1 projectors are used. In such cases we can claim that the entanglement in the measurement space state is the amount of entanglement that is useful for the particular protocol. The operational entanglement measure thus obtained encompasses those devised previously to deal with the indistinguishable particles (like for example in \cite{Tichy09, Wiseman03}) since the measurement operators can always be made insensitive to the exchange of particles. 

% Since no non-measurable information can be discerned from the measurement space state it is the perfect tool for defining \emph{operational} versions of information theoretic quantities. We use the measurement space formalism to define what we will later show is \emph{useful entanglement}. We first map the state into the measurement space and then apply one of the standard entanglement measures to that state. We shall show that this state is just as effective in quantum communication protocols as the original state. More precisely, using a state with this amount of entanglement and perfect measurement operators for a quantum communication protocol results in the same fidelity as using the imperfect measurements and the original state. 

% The above can be directly applied to the quantum teleportation protocol. The fidelity of quantum teleportation protocol is a monotone increasing function of entanglement. Alice and Bob are equipped with imperfect measurements and a pure entangled. If they had perfect rank-1 projectors, they could get an equal fidelity with a less entangled state. 

We also prove that when the measurements are local the map from the original state to the measurement space is LOCC. This implies that the entanglement in the measurement space is always less than or equal to the standard entanglement. Furthermore, we find an upper bound for the two-dimensional systems and show when the upper bound is attained.

Finally, we apply the above formalism to the mode entaglement. We find that the isolated single-particle mode entanglement cannot have an operational meaning unless further degrees of freedom are added to the system. We examine the more general case of adding more particles and more degrees of freedom and find the upper bound on the operational entanglement. Whenever the total number of measurement outcomes is not a prime number mode entanglement is found to have an operational meaning as well. 

Following this introduction, the paper is organised into the definition of measurement space in section \ref{sec:Definition}, the application to entanglement in section \ref{sec:OperationalEntanglement}, followed by a look at the mode entanglement in section \ref{sec:ModeEntanglement} and concluding remarks in section \ref{sec:Conclusion}.

\section{Definition of measurement space}\label{sec:Definition}

Here we shall consider the quantum system $S$ to consist of a state $\ket{\psi}$ and a set of generalized measurement operators. We shall denote the original Hilbert space as $\mathcal{H}$ and we define a new Hilbert space, called the measurement space, and denote it as $\mathcal{M}$. Every state $\ket{\psi} \in \mathcal{H}$ has a corresponding measurement space state denoted as $\mket{\phi}$. The map from $\ket{\psi}$ to $\mket{\psi}$ is defined as
\begin{align}
 \mket{\psi} = m_S(\ket{\psi}) = \sum_m \left(\braopket{\psi}{M_m^\dagger M_m}{\psi}\right)^{1/2} \mket{m} \label{eq:MeasState},
\end{align}
where $\mket{m}$ are the orthonormal basis states of the measurement space $\mathcal{M}$ and $M_m$ are the generalized measurements. The dimension of $\mathcal{M}$ is therefore the same as the number of measurement outcomes in our quantum system $S$. As argued in the introduction, the measurement space state contains exactly the properties of the quantum reality measurable with the particular set of measurements. 

The above formulation is a type of nonlinear Naimark dilation, where we pay the price of nonlinearity in order to have simplicity in the calculation of the map. 

% This idea is justified by looking at the first stage of the von Neumann measurement formalism, a unitary operation $U$, acting on our state $\ket{\psi}$ and an ancilla, initialised to $\ket{0}$. The ancilla can be thought of as the state of the measurement device. The unitary $U$, defined by
% \begin{align}
%  U \ket{\psi} \ket{0} = \sum_m M_m \ket{\psi} \ket{m},
% \end{align}
% thus entangles the two together. The measurements on the ancilla of the form $\op{1} \otimes \ket{m}\bra{m}$ are now the same as the measuremnts $M_m$ on the original state $\ket{\psi}$. Now notice that the similarity between the state $U \ket{\psi} \ket{0}$ and the measurement space state (\ref{eq:MeasState}). The measurement space state omits from $U \ket{\psi}\ket{0}$ only the information that cannot be read by the measurements of the type $\op{1} \otimes \ket{m}\bra{m}$. It essentially takes all the information contained in the parts $M_m \ket{\psi}$ and takes it to a number that corresponds to the square root of the norm of $M_m \ket{\psi}$. \\ \\

\section{Entanglement in measurement space}\label{sec:OperationalEntanglement}

In this section we shall examine the properties exhibited by the entanglement in the measurement space. In other words, we shall consider the following quantity
\begin{align}
 E_m(\ket{\psi}) = E(\mket{\psi}),
\end{align}
where $E$ is an entanglement measure. 
% This definition can be extended to mixed states by using the convex roof construction \cite{Vedral98, Plenio07}, leading to the following definition
% \begin{widetext}
% \begin{align}
%  E_u(\rho) = \inf \Bigl\{ \sum_j p_j E\left(\mket{\psi_j}\mbra{\psi_j}\right): \rho = \sum_j p_j \ket{\psi_j}\bra{\psi_j}\Bigr\}.
% \end{align}
% \end{widetext}
% However, in the following we shall concentrate on the pure state definition. 

We know that every protocol can be implemented by starting with a measurement by Alice, classical communication of the measurement result to Bob, followed by a unitary by Bob (for proof of this see \cite{NielsenChuang}). We shall label the measurement operators that Alice uses as $M_1, ..., M_n$. In response to Alice obtaining the result $k$, Bob implements the unitary $U_k$. In order to measure the fidelity of the protocol, Bob devises two measurements: $M_{y,k}$ corresponding to the protocol succeeding and $M_{n,k}$ corresponds to protocol failing. As before, the index $k$ enumerates Alice's measurement outcomes and is there for the cases where the success and failure of the protocol depends on Alice's measurement result. In many practical cases this is so. For example, the quantum key distribution protocol succeeds if Bob obtains the same measurement result as Alice when they measure in the same basis.

% \footnote{Of course in reality Bob would never actually measure the fidelity of the protocol, he would simply proceed to use the output state in whatever manner he intended to, not least because he may not actually have access to the measurements described by $M_y$ and $M_n$. However, as these measurements describe the fidelity, they are what would need to be used should we ever wish to evaluate experimentally how well our protocol works.}

We are now ready to prove our theorem.
\begin{theorem}
 Given a state $\ket{\psi}$ to be used as a resource in a quantum communication protocol, the success rate of the protocol is the same with the original state and original \emph{imperfect} measurements as it is with the measurement space state and rank-1 projective measurements.
\end{theorem}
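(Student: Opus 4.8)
The plan is to track the joint probability distribution of Alice's and Bob's outcomes (including the success/failure flag) through the entire protocol in both scenarios and show they coincide term by term. First I would fix notation: in the original scenario Alice applies $\{M_k\}$ to $\ket{\psi}$, obtaining outcome $k$ with probability $p_k = \braopket{\psi}{M_k^\dagger M_k}{\psi}$ and post-measurement state $M_k\ket{\psi}/\sqrt{p_k}$; Bob then applies $U_k$ and finally his success measurement $M_{y,k}$, so the overall success probability is $P_{\mathrm{succ}} = \sum_k \| M_{y,k}\, U_k\, M_k \ket{\psi}\|^2$. (One should be slightly careful that Bob's operators act on Bob's subsystem while $M_k$ acts on Alice's; I would write everything with explicit tensor factors, but suppress them here.)

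Next I would compute the same quantity in the measurement-space scenario. By Eq.~(\ref{eq:MeasState}) the resource becomes $\mket{\psi} = \sum_k \sqrt{p_k}\,\mket{k}$ with the $\mket{k}$ orthonormal, and the measurements available to Alice are now the rank-1 projectors $\mket{k}\mbra{k}$. Measuring this state, Alice obtains outcome $k$ with probability exactly $|\sqrt{p_k}|^2 = p_k$ — matching the original — and the post-measurement state is simply $\mket{k}$. The one genuinely new ingredient is describing what Bob does in the measurement space: Bob's conditional operations $U_k$ followed by $M_{y,k}$ must themselves be passed through the map $m_S$, so that the relevant success amplitude is governed by $\braopket{k}{(M_{y,k}U_k)^\dagger (M_{y,k} U_k)}{k}$ in the appropriately enlarged measurement space for the joint system-plus-apparatus. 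I would argue that because the map replaces each branch by an orthogonal component carrying the square root of that branch's total weight, the contribution of branch $k$ to $P_{\mathrm{succ}}$ is $p_k \cdot \braopket{k}{\cdots}{k}$, which equals $\| M_{y,k} U_k M_k \ket{\psi}\|^2$ once one unpacks the definition. Summing over $k$ then gives the same $P_{\mathrm{succ}}$.

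The key steps, in order, are: (i) write the original success probability as a sum over Alice's branches of squared norms; (ii) apply the definition~(\ref{eq:MeasState}) to see that Alice's outcome statistics are reproduced exactly by rank-1 projective measurements on $\mket{\psi}$; (iii) identify the correct measurement-space description of Bob's branch-conditioned operations and his success test, using the remark after Eq.~(\ref{eq:VonNeumann}) that the von Neumann pre-measurement state literally \emph{is} the measurement-space state; (iv) check branch-by-branch equality of the success contributions and sum.

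The main obstacle I anticipate is step (iii): making precise what "the measurement space state" means once Bob's operations and his two-outcome success measurement are folded in, i.e.\ showing that the nonlinear map $m_S$ is compatible with the sequential composition measure $\to$ unitary $\to$ measure, rather than only with a single measurement. Because $m_S$ is nonlinear, one cannot simply say it is a channel that commutes with everything; instead I expect to argue that at each stage the only information that survives is exactly the outcome probabilities, and these compose classically (Alice's result $k$ is classical data by the time Bob acts), so the orthogonalization performed by $m_S$ never destroys anything the protocol actually uses. Handling the index $k$ on Bob's measurements $M_{y,k}, M_{n,k}$ cleanly — treating Alice's recorded outcome as a classical register that both scenarios share — is the technical heart of the argument.
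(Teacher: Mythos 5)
Your overall idea --- that the protocol's figure of merit depends only on outcome probabilities, and the measurement-space map preserves exactly those --- is the right one, and your steps (i) and (ii) match the paper. But the route you choose creates the very difficulty you flag in step (iii), and you leave that step as an anticipated obstacle ("I would argue\dots", "I expect to argue\dots") rather than an argument. Building the measurement space from Alice's measurement alone, obtaining $\mket{\psi}=\sum_k\sqrt{p_k}\mket{k}$, and then trying to push Bob's conditional operations $U_k$ and his success test through the nonlinear map $m_S$ is genuinely problematic: after Alice's rank-1 projection the post-measurement state in your construction is the bare basis vector $\mket{k}$, which carries no information beyond the label $k$, so there is nothing left on which Bob's $M_{y,k}$ can act to recover the conditional success probability $p_{k,y}/(p_{k,y}+p_{k,n})$. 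As written, your branch-$k$ contribution $p_k\cdot\braopket{k}{\cdots}{k}$ is not defined.

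The paper dissolves this by a different choice of which measurement set to dilate: it treats the \emph{entire} protocol's final statistics as a single joint generalized measurement with outcomes indexed by $(k,y)$ and $(k,n)$, with effects $M_k^\dagger M_k\otimes M_{y,k}^\dagger M_{y,k}$ and $M_k^\dagger M_k\otimes M_{n,k}^\dagger M_{n,k}$ (Bob's unitary absorbed into his success/failure operators). The measurement-space state is then
\begin{align}
 \mket{\psi} = \sum_k \left( \sqrt{p_{k,y}}\,\mket{k}\mket{y,k} + \sqrt{p_{k,n}}\,\mket{k}\mket{n,k} \right),
\end{align}
and rank-1 projectors onto $\mket{k}\mket{y,k}$, $\mket{k}\mket{n,k}$ reproduce the joint distribution by construction; the conditional probabilities $p(y\,|\,m)=p_{m,y}/(p_{m,y}+p_{m,n})$ then agree trivially with the original Hilbert-space computation, and no composition of $m_S$ with intermediate dynamics is ever needed. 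If you want to keep your sequential picture, you must either adopt this joint-POVM dilation or explicitly enlarge the measurement space so that the post-measurement branch retains Bob's subsystem --- which is essentially the same fix. Without one of these, step (iii) is a gap, not a technicality.
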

\begin{proof}
 Denote $p_{k,y} = \bra{\psi} M_k^\dagger M_k \otimes M_{y,k}^\dagger M_{y,k} \ket{\psi}$ the probability of the outcome corresponding to Alice obtaining $k$ and Bob finding that the protocol succeeded, while $p_{k,n}$ is defined similarly. First we shall find the success probabilities in the measurement space. The measurement space state is given by
 \begin{align}
  \mket{\psi} = \sum_k \left( \sqrt{p_{k,y}} \mket{k} \mket{y,k} + \sqrt{p_{k,n}} \mket{k} \mket{n,k} \right).
 \end{align}
 The success probability given that Alice obtained the outcome $k=m$ is given by
 \begin{align}
  p(y \, | \, m) & = \frac{p(y \cap m)}{p(m)} = \frac{p_{m,y}}{\mbraopket{\psi}{ \mket{m}\mbra{m} \otimes \op{1}}{\psi}} \\
  & = \frac{p_{m,y}}{p_{m,y}+p_{m,n}}.
 \end{align}
 The overall probability of success is then given by $\sum_m p(y\, | \, m) p(m) = \sum_m p_{y,m}$.
 
 Now we repeat the same calculation in the original Hilbert space, where we find
 \begin{align}
  p(m) & = \braopket{\psi}{M_m^\dagger M_m \otimes \op{1}}{\psi} \\
  & = \bra{\psi}\left(M_m^\dagger M_m \otimes M_{m,y}^\dagger M_{m,y} \right. \\ + & \left. M_m^\dagger M_m \otimes M_{m,n}^\dagger M_{m,n}  \right)\ket{\psi} \\
  & = p_{m,y} + p_{m,n}.
 \end{align}
 So as above, we obtain 
 \begin{align}
  p(y \, | \, m) = \frac{p_{m,y}}{p_{m,y}+p_{m,n}}.
 \end{align} This completes the proof.
\end{proof}

% 
%  We assume an entanglement measure satisfies the following properties \cite{Plenio07}:
% \begin{enumerate}
%  \item {$E$ is a function mapping states of a tensor product Hilbert space $\mathcal{H}^A\otimes\mathcal{H}^B$ into non-negative real numbers. \label{PropPositive}}
%  \item{If the state $\rho$ is separable, i.e. $\rho = \sum_i p_i \rho^A_i \otimes \rho^B_i$, then $E(\rho) = 0$. \label{PropSeparable}}
%  \item{$E$ does not increase under LOCC \label{PropLOCC}}
%  \item{For pure states $\ket{\psi}\bra{\psi}$ the measure reduces to the entropy of entanglement
%    \begin{align}
%     E(\ket{\psi}\bra{\psi}) = S(\Tr_B(\ket{\psi}\bra{\psi})).
%    \end{align}
%    \label{PropEntropy}}
% \end{enumerate}
% 
% 
Next we shall show that as long as the measurement operators are local, entanglement in the measurement space is always less than that in the original Hilbert space. If the measurements are not local, this can easily be shown not to be the case with an example. 
\begin{theorem}
 Given a set of local generalized measurement operators $M_{m_A}^A \otimes M_{m_B}^B$ and a state $\ket{\psi}$, it is always true that $E(\mket{\psi})\leq E(\ket{\psi})$.
\end{theorem}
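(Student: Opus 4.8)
The plan is to deduce the inequality from the fact that every entanglement measure $E$ is non-increasing under LOCC, so that it suffices to exhibit, for the given $\ket{\psi}$ and measurement operators, an LOCC channel $\Lambda$ with $\Lambda(\ket{\psi}\bra{\psi}) = \mket{\psi}\mbra{\psi}$; here $\Lambda$ may depend on $\ket{\psi}$, which is legitimate since the measurement space map $m_S$ is nonlinear and so cannot itself be a single channel. Throughout write $p_{m_A m_B} = \braopket{\psi}{M_{m_A}^{A\dagger}M_{m_A}^A \otimes M_{m_B}^{B\dagger}M_{m_B}^B}{\psi}$, so the target is $\mket{\psi} = \sum_{m_A, m_B}\sqrt{p_{m_A m_B}}\,\mket{m_A}\mket{m_B}$.

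First I would produce the target up to the original registers by dilating the two measurements locally. Alice adjoins an ancilla $\mathcal{M}_A$ of dimension $n_A$ (her number of outcomes) in a fixed state and applies the local unitary that extends the isometry $\ket{\phi}_A \mapsto \sum_{m_A} M_{m_A}^A\ket{\phi}_A \otimes \mket{m_A}$ --- an isometry precisely because $\sum_{m_A} M_{m_A}^{A\dagger}M_{m_A}^A = \op{1}_A$; Bob does the same with an ancilla $\mathcal{M}_B$. The global state becomes $\ket{\Phi} = \sum_{m_A, m_B} (M_{m_A}^A \otimes M_{m_B}^B)\ket{\psi}\otimes \mket{m_A}\mket{m_B}$. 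This uses only local unitaries and ancillas, so $E(\ket{\Phi}) = E(\ket{\psi})$, and it is exactly here that the product form of the measurement is essential --- a nonlocal $M_m$ could only be dilated by a global unitary. What remains is to decouple and discard $\mathcal{H}_A \otimes \mathcal{H}_B$ from $\ket{\Phi}$, leaving precisely $\mket{\psi}$ on $\mathcal{M}_A \otimes \mathcal{M}_B$.

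This last step is the crux. Merely tracing out $\mathcal{H}_A \otimes \mathcal{H}_B$ gives a state whose diagonal in the $\mket{m_A}\mket{m_B}$ basis is already $p_{m_A m_B}$, but whose off-diagonal entries are the overlaps $\langle (M_{m_A'}^A \otimes M_{m_B'}^B)\psi \,|\, (M_{m_A}^A \otimes M_{m_B}^B)\psi \rangle$; by Cauchy--Schwarz these are bounded in modulus by $\sqrt{p_{m_A m_B}\, p_{m_A' m_B'}}$, with equality --- so that the discarded state already equals $\mket{\psi}\mbra{\psi}$ --- exactly when all the vectors $(M_{m_A}^A \otimes M_{m_B}^B)\ket{\psi}$ are parallel, which is the case the paper will single out as saturating the bound. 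In general a genuine LOCC step is needed, and the natural tool is Nielsen's majorization theorem: $\ket{\psi}$ is LOCC-convertible to $\mket{\psi}$ iff the Schmidt spectrum of $\ket{\psi}$, i.e. $\operatorname{spec}(\Tr_B\ket{\psi}\bra{\psi})$, is majorized by that of $\mket{\psi}$, i.e. $\operatorname{spec}(\rho_{\mathcal{M}_A})$ with $\rho_{\mathcal{M}_A} = \Tr_{\mathcal{M}_B}\mket{\psi}\mbra{\psi}$ having entries $(\rho_{\mathcal{M}_A})_{m_A m_A'} = \sum_{m_B}\sqrt{p_{m_A m_B}\, p_{m_A' m_B}}$. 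Thus the theorem reduces to the single spectral majorization $\operatorname{spec}(\Tr_B\ket{\psi}\bra{\psi}) \prec \operatorname{spec}(\rho_{\mathcal{M}_A})$.

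Proving that majorization is the main obstacle, and it must invoke locality (for nonlocal measurements the statement fails, consistent with the dilation above no longer being LOCC). I would attempt it either abstractly --- rewriting $p_{m_A m_B} = \Tr[M_{m_A}^{A\dagger}M_{m_A}^A\, \Tr_B((\op{1}_A \otimes M_{m_B}^{B\dagger}M_{m_B}^B)\ket{\psi}\bra{\psi})]$ to exhibit $\rho_{\mathcal{M}_A}$ as built from $\Tr_B\ket{\psi}\bra{\psi}$ by applying the two POVMs, keeping only the recorded probabilities, and re-cohering them, and then arguing that this ``phase alignment'' can only push the spectrum up in majorization order --- or constructively, building the LOCC map in the style of the constructive proof of Nielsen's theorem (Alice makes a fine-grained measurement on $\mathcal{H}_A \otimes \mathcal{M}_A$, communicates the outcome, Bob applies a conditional unitary on $\mathcal{H}_B \otimes \mathcal{M}_B$, arranged so every branch carries the Schmidt data of $\mket{\psi}$ while decoupling $\mathcal{H}_A \otimes \mathcal{H}_B$) and checking that the average over branches closes. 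I expect this verification --- and in particular locating precisely where the factorisation $M_{m_A}^A \otimes M_{m_B}^B$ is used --- to be where the real work lies; once the map is seen to be LOCC, monotonicity of $E$ finishes the proof immediately.
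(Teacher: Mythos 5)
Your overall strategy --- exhibit a ($\ket{\psi}$-dependent) LOCC map carrying $\ket{\psi}$ to $\mket{\psi}$ and invoke monotonicity of $E$ --- is exactly the paper's, and your first step, the local Naimark dilation producing $\ket{\Phi}=\sum_{m_A,m_B}(M_{m_A}^A\otimes M_{m_B}^B)\ket{\psi}\otimes\mket{m_A}\mket{m_B}$ with $E(\ket{\Phi})=E(\ket{\psi})$, coincides with the paper's first step; you also correctly diagnose that everything hinges on decoupling $\mathcal{H}_A\otimes\mathcal{H}_B$ from the ancillas, and that a plain partial trace fails because of the off-diagonal overlaps. But at precisely that point the proof stops. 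You reduce the theorem to the majorization $\operatorname{spec}(\Tr_B\ket{\psi}\bra{\psi})\prec\operatorname{spec}(\rho_{\mathcal{M}_A})$ and then offer only two candidate strategies for establishing it, conceding that this is ``where the real work lies.'' Neither is carried out, and the heuristic that ``re-cohering the recorded probabilities can only push the spectrum up in majorization order'' is an assertion, not an argument. So the proposal has a genuine gap at the one step that is not routine.

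The paper closes this gap constructively, without computing spectra at all. For each $m_A$ let $\sigma_{m_A}=\sum_{m_B}\Tr_B\left[(M_{m_A}^A\otimes M_{m_B}^B)\ket{\psi}\bra{\psi}({M_{m_A}^A}^\dagger\otimes {M_{m_B}^B}^\dagger)\right]$ be the unnormalized conditional state of $\mathcal{H}_A$ given ancilla reading $m_A$. Alice measures $\mathcal{H}_A$, controlled on $\mathcal{M}_A$, in the basis obtained by discrete-Fourier-transforming the eigenbasis of $\sigma_{m_A}$ (equation (\ref{eq:MeasureFourier})). Because these vectors are unbiased with respect to that eigenbasis, every outcome has conditional probability exactly $1/n_A$ \emph{independently of $m_A$}; hence the measurement neither reveals anything about, nor disturbs the amplitudes on, $\mathcal{M}_A$, and in each branch it leaves $\mathcal{H}_A$ in a known pure state that a controlled unitary $\sum_{m_A}U_{m_A}\otimes\ket{m_A}\bra{m_A}$ resets to $\ket{0}$. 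Repeating on Bob's side yields $\ket{0}\ket{0}\otimes\mket{\psi}$. This is essentially the ``constructive, Nielsen-style'' route you gestured at but did not execute; and note that on your alternative route the existence of such an explicit protocol is what would certify the majorization --- absent an independent spectral argument, which you have not supplied, the reduction to majorization does not by itself advance the proof.
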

\begin{proof}
The existence of an LOCC map taking $\ket{\psi}$ to $\mket{\psi}$ would immediately imply that $E(\mket{\psi}) \leq E(\ket{\psi})$ (see \cite{Nielsen99, NielsenChuang, NielsenNotes2002}).

The state $\ket{\psi}$ is distributed among Alice and Bob. To prove the theorem, we add one measurement apparatus ancilla to each of the parties and initialise them in some state $\ket{0}$ and continue according to the measurement scheme:
\begin{align}
 U \ket{\psi} \ket{0} \ket{0} = \sum_{m_A, m_B} M_{m_A}^A \otimes M_{m_B}^B \ket{\psi} \ket{m_A} \ket{m_B}.
\end{align}
This operation is local and unitary and so $E(\ket{\psi}\ket{0}\ket{0}) = E(U \ket{\psi} \ket{0} \ket{0})$. Now we construct a projection operator for Alice of the form
\begin{align}
 P^A = \sum_{m_A} P_{m_A}^{j_{m_A}} \otimes \ket{m_A}\bra{m_A},
\end{align}
where $P_{m_A}$ is a rank-1 projector corresponding to some measurement outcome. The operators must be so constructed that each outcome $j_{m_A}$ has the probability of $\frac{1}{n_A}$, where $n_A$ is the dimension of Alice's original Hilbert space. This is achieved by making the operators project onto the quantum fourier transforms of the eigenstates of Alice's partial density matrix appearing in front of $\ket{m_A}\bra{m_A}$. The partial density matrix describing Alice's subsystem of $U \ket{\psi} \ket{0} \ket{0}$ is given by  
\begin{widetext}
 \begin{align}
  \Tr_B \left[U \ket{\psi} \ket{0} \ket{0} \right] = \sum_{m_A} \sum_{m'_A}\sum_{m_B} \Tr_B\left[ M_{m_A}^A \otimes M_{m_B}^B \ket{\psi}\bra{\psi} {M_{m'_A}^A}^\dagger \otimes {M_{m_B}^B}^\dagger \right] \otimes \ket{m_A}\bra{m'_A}.
 \end{align}
 \end{widetext}  Now consider only the part $\sum_{m_B} \Tr_B\left[ M_{m_A}^A \otimes M_{m_B}^B \ket{\psi}\bra{\psi} {M_{m_A}^A}^\dagger \otimes {M_{m_B}^B}^\dagger \right]$ (notice that prime disappeared over the second $m_A$). This is a non-normalized partial density matrix and thus has a set of orthogonal eigenstates. Label these states as $\uket{\phi_{m_A}^1}, \uket{\phi_{m_A}^2}, \ldots, \uket{\phi_{m_A}^{n_A}}$ ($n_A$ here is the dimension of Alice's ancilla). Now do a quantum Fourier transform on these states and label the results as $\uket{\omega_{m_A}^1}, \uket{\omega_{m_A}^2}, \ldots, \uket{\omega_{m_A}^{n_A}}$. I.e.,
\begin{align}
 \uket{\omega_{m_A}^j} = \frac{1}{\sqrt{n_A}}\sum_{k=0}^{n_A - 1} e^{2 \pi i j k / n_A} \uket{\phi_{m_A}^k}. \label{eq:MeasureFourier}
\end{align}
 It is clear that the probability of measuring $\uket{\omega_{m_A}^j}$ is equal to $\frac{1}{n_A}$ for any $m_A$. So let us denote $P_{m_A}^{j_{m_A}} = \uket{\omega_{m_A}^{j_{m_A}}}\ubra{\omega_{m_A}^{j_{m_A}}}$. For every $m_A$ we have a number of measurement results $1, \ldots, n_A$. For each instance of measurement, we can then write down a list of results, one corresponding to each $m_A$ and put them in order into a vector $\vec{j}$. 
This is followed by a unitary operation of the form
\begin{align}
 U_A = \sum_{m_A} U_{m_A} \otimes \ket{m_A}\bra{m_A},
\end{align}
where $U_{m_A}$ maps the new state in front of $\ket{m_A}$ to some fixed state $\ket{0}$. Repeating the same for Bob leaves us with the state
\begin{align}
 \ket{0}\otimes \ket{0} \sum_{m_A, m_B} \left(p_{m_A, m_B}\right)^{1/2} \ket{m_A} \otimes \ket{m_B},
\end{align}
where the ancilla is now exactly our measurement space state!
\end{proof}
It is a corollarly of the above theorem that for any separable state $\ket{\psi}$, operationally useful entanglement vanishes.

Konrad et. al. show in \cite{Konrad08} that given a two-partite system of qubits and a LOCC map of the form $\mathcal{L}\otimes\op{1}$ it holds that
\begin{align}
 C\left(\mathcal{L}\otimes\op{1} \ket{\psi}\right) = C\left(\mathcal{L}\otimes\op{1} \ket{\phi^+}\right) C\left(\ket{\psi}\right),
\end{align}
where $C$ is concurrence, a simple to calculate measure of two qubit entanglement \cite{Wooters98}. They also show that in general for LOCC maps that are not of the form $\mathcal{L}\otimes\op{1}$ we have
\begin{align}
 C\left(\mathcal{L} \ket{\psi}\right) \leq C\left(\mathcal{L} \ket{\phi^+}\right) C\left(\ket{\psi}\right).
\end{align}
The discussion below thus applies only to qubits.

We showed above that the transformation from $\ket{\psi}$ to $\mket{\psi}$ is an LOCC map. We shall denote the Alice's part of the map as $\mathcal{L}_A(\ket{\psi})$ and Bob's part of the map as $\mathcal{L}_B(\ket{\psi})$, so that $\mathcal{L}(\ket{\psi}) = \mathcal{L}_A(\ket{\psi}) \otimes \mathcal{L}_B(\ket{\psi})$. Using the above results, we can then say that
\begin{align}
 C\left(\mathcal{L}_A(\ket{\psi}) \ket{\psi}\right) = C\left(\mathcal{L}_A(\ket{\psi})\ket{\phi^+}\right) C\left(\ket{\psi}\right).
\end{align}
But we notice that in the above proof that the only place where our map depends on the state is when we measure in the Fourier transformed basis - equation (\ref{eq:MeasureFourier}). Further, notice that if the operators $M_{m_A}$ are linear combinations of the projectors on the Schmidt basis of the state $\ket{\psi}$, then the measurements in equation (\ref{eq:MeasureFourier}) will simply be the Fourier transforms of the Schmidt basis (to see this consider the eigenstates of $\sum_{m_B} \Tr_B\left[ M_{m_A}^A \otimes M_{m_B}^B \ket{\psi}\bra{\psi} {M_{m_A}^A}^\dagger \otimes {M_{m_B}^B}^\dagger \right]$). Since for the Bell state $\ket{\phi^+}$ any basis is a Schmidt basis, we can thus see that $\mathcal{L_A}(\ket{\phi^+})$ acts in exactly the same way as $\mathcal{L_A}(\ket{\psi})$. Therefore
\begin{align}
 C\left(\mathcal{L}_A(\ket{\psi}) \ket{\psi}\right) = C\left(\mathcal{L}_A(\ket{\phi^+})\ket{\phi^+}\right) C\left(\ket{\psi}\right).
\end{align}
Notice that the right hand side is proportional to $C\left(\ket{\psi}\right)$. 

When the measurements of $B$ happen to also be along the Schmidt basis of $\mathcal{L}_A \ket{\psi}$, we can go further and say
\begin{align}
 C\left(\mathcal{L}_A \otimes \mathcal{L}_B \ket{\psi}\right)  = C\left(\mathcal{L}_A \ket{\phi^+}\right) C\left(\mathcal{L}_B \ket{\psi}\right) \nonumber \\
  = C\left(\mathcal{L}_A \ket{\phi^+}\right) C\left(\mathcal{L}_B  \ket{\phi^+}\right) C\left(\ket{\psi}\right).
\end{align}
Here we omitted the explicit dependence of the maps on the states, since the state they act on and the dependence are the same.

In general, however, we will only have that
\begin{align}
 C\left(\mathcal{L}_A \otimes \mathcal{L}_B \ket{\psi}\right) & \leq \nonumber \\ & C\left(\mathcal{L}_A \ket{\phi^+}\right) C\left(\mathcal{L}_B  \ket{\phi^+}\right) C\left(\ket{\psi}\right).
\end{align}
To see this, notice that if we do not Fourier transform the correct basis to get to equation (\ref{eq:MeasureFourier}), various measurement outcomes in the next stage of the map will no longer occur with equal probabilities. This essentially introduces an additional local positive semi-definite map on one of the subsystems and results in lower entanglement. Indeed, the output of this procedure is generally a mixed state.

What the above means is that for a set of measurements there always exists a unitary that maximizes the amount of measurement space entanglement. The unitary is such that Alice's operators are aligned with the Schmidt basis of the state $\ket{\psi}$, while Bob's are aligned with $\mathcal{L}_A \ket{\psi}$. The amount of entanglement in the measurement space is then a linear function of the amount of entanglement in the original Hilbert space. 

Because of all these results, the entanglement in the measurement space can be considered as being the amount of entanglement that can be operationally used in the context of quantum communication protocols.

% We show above that the map from $\mathcal{H}$ to $\mathcal{M}$ is LOCC. The complication is that our map in general, but not always, depends on the input state, and that it is in general of the form $\mathcal{L_A}\otimes\mathcal{L_B}$ (where we have neglected to write the explicit state dependence). So in general we can state only that
% \begin{align}
%  C\left( \mket{\psi} \right) = C\left(\mathcal{L_A} \otimes \mathcal{L_B} (\ket{\psi}) \ket{\phi^+}\right) C\left(\ket{\psi}\right).
% \end{align}
% Here we have an equality rather than an inequality because we know the state $\mathcal{L_A}(\ket{\psi})\otimes\op{1} \ket{\psi}$ is pure and this allows us to use the equality again for $\op{1}\otimes\mathcal{L_B}(\ket{\psi})$ to obtain the above equation. 

% However, we can obtain a further relation when 

\section{Mode Entanglement}\label{sec:ModeEntanglement}

Entanglement of quantum field modes \cite{Heaney06} occurs when dealing with a state of the form 
\begin{align}
 \ket{\psi} = \frac{1}{\sqrt{2}}\left(\op{a}^\dagger +\op{b}^\dagger\right) \ket{0}.
\end{align}
Here $\ket{0}$ is a vacuum state. As an example, consider a single, perfectly isolated spin $1/2$ particle in the state
\begin{align}
 \ket{\psi} = \frac{1}{\sqrt{2}} \left( \ket{0} + \ket{1} \right).
\end{align} Considering the state $\ket{0}$ to be mode $a$ and the state $\ket{1}$ to be mode $b$, we can also write it in the form $\ket{\psi} = \frac{1}{\sqrt{2}} \left(\ket{1_a 0_b} + \ket{0_a 1_b}\right)$, or in the second quantized form as above. This looks just like one of the maximally entangled Bell states. The big question then is: \emph{Can this entanglement be used to facilitate information theoretic protocols?} Previous attempts to answer this question show that this appears to be possible with an addition of a very large particle reservoir \cite{Heaney09-1, Heaney09}. Here we look at this question from a more general perspective.

The mathematical definition of entanglement would have us conclude that the above state is entangled. In fact, any state in a Hilbert space that is at least two dimensional can be written as an entangled state since we can always find two modes such that the probability of the system being in one or the other of the modes is non-zero.

Consider the measurement space map of $\ket{\psi}$. Clearly, there are two possible measurement outcomes for \emph{any} state; these are $\ket{1_a 0_b}\bra{1_a 0_b}$ and $\ket{0_a 1_b}\bra{0_a 1_b}$. We therefore know that $\ket{1_a 0_b}\bra{1_a 0_b} + \ket{0_a 1_b}\bra{0_a 1_b} = \op{1}$ and we know that our set of measurements is complete. Thus, as we have two measurements, the measurement space will also be two dimensional, the basis consisting of $\mket{a}$ and $\mket{b}$, one for the particle being in mode $a$ and the other for $b$. The map of the above state into the measurement space is $\mket{\psi} = \frac{1}{\sqrt{2}}\left(\mket{a} + \mket{b}\right)$. Since 2 is a prime number there is no way to separate the Hilbert space into a tensor product space and thus it is not possible for the \emph{single} particle to posses any measurement space entanglement. However, if we find a way to add more measurements, by for example introducing additional particles or degrees of freedom, the mode entanglement may well acquire measurable consequences. Indeed, it was found that the single-particle mode entanglement can be used in quantum teleportation, as long as the single particle is augmented with a particle reservoir in the form of a Bose-Einstein condensate. In this scenario, the states $\ket{0_a 0_b}$ and $\ket{1_a 1_b}$ effectively become possible to rotate into and so it is no longer true, as it was above, that $\ket{1_a 0_b}\bra{1_a 0_b} + \ket{0_a 1_b}\bra{0_a 1_b} = \op{1}$.

If we simply add additional particles to the system, without chaging their indistinguishable nature, it is also possible to increase the amount of entanglement in the measurement space. We define a separate Fock space mode for each of the possible internal states, in our case $\ket{0}$ and $\ket{1}$. As such we notice that the dimensionality of the resulting Fock space is equal to the number of possible ways the number of particles can be partitioned into two non-negative integers. We consider $n+m$ and $m+n$ as two distinct partitions. For example, when we have two particles with two internal degrees of freedom, our space consists of $\ket{20}$, $\ket{11}$ and $\ket{02}$, where we have that $2 = 2+0 = 1+1 = 0+2$. With $n$ particles our space is then $n+1$ dimensional. If our measurements consist only of particle-number measurements, we can only have operationally useful entanglement when $n+1$ is not a prime number. More generally, assume we have $n$ particles, each of which has $m$ possible internal states and each of the states corresponds to a mode. Then the number of possible measurement outcomes is equal to the number of partitions of $n$ into $m$ integers. In other words, it is the number of ways that $n$ can be written as $n=n_1+n_2+...+n_m$, where $n_j$ are non-negative integers and where a different order of summands counts as a different partition. In general, the number of such partitions is equal to the binomial coefficient $\left( \begin{array}{c} n+m-1 \\ m-1 \end{array} \right)$. Out of all pairs of divisors of the number of partitions, suppose we each time choose the largest one and out of the set of these largest ones we choose the smallest one and call it $p$. Mathematically,
\begin{multline}
 p = \inf\left\{ \max(k,l): k,l \in \mathbb{Z^+} \hspace{2mm} \mathrm{ and } \right. \\ \left. \hspace{2mm} k \cdot l = \left( \begin{array}{c} n+m-1 \\ m-1 \end{array} \right) \right\}.
\end{multline}
Then the bipartite useful entanglement based on the entropy of entanglement is bounded above by
\begin{align}
E_u \leq \log\left[\frac{1}{p} \left( \begin{array}{c} n+m-1 \\ m-1 \end{array} \right) \right].
\end{align}
Notice that the above definition implies that $E_u = 0$ if the binomial coefficient is a prime number and that $p$ will be smallest (and hence the entanglement bound largest) when the binomial coefficient is a perfect square, as expected.  Here we assumed that our measurement set consists only of number-of-particles measurements and that our system is perfectly isolated. Also notice that since $p=2$ is the smallest prime number, it is always true that the following \emph{weaker} inequaltiy holds
\begin{align}
 E_u \leq \log\left[\frac{1}{2} \left( \begin{array}{c} n+m-1 \\ m-1 \end{array} \right) \right].
\end{align}

% \begin{figure}
%  \centering
%  \includegraphics[scale=0.5]{MaxUsefulModeEntanglement.png}
%  % MaxUsefulModeEntanglement.png: 438x439 pixel, 72dpi, 15.45x15.49 cm, bb=0 0 438 439
%  \caption[Caption]{The figure shows the bound for the useful entanglement $E_u$ as a function of the number of particles $n$ and the number of internal degrees of freedom $m$ of the particles. The dark squares signify smaller bound for $E_u$. Note that the binomial is not often a prime number for large $n$ and $m$. In fact, it is a known mathematical theorem that any integer divides \emph{almost all} (i.e. all but finitely many) binomial coefficients.}
%  \label{fig:MaxMode}
% \end{figure}

\section{Conclusion}\label{sec:Conclusion}

We introduced the concept of a measurement space, a map of the Hilbert space where states contain only the information extractable through available measurements. We considered how entanglement measures behave in the measurement space and proved several results, summarised as follows:
\begin{enumerate}
 \item As long as the measurement operators are local, the amount of entanglement in the measurement space never exceeds the original amount of entanglement
 \item If one had the availability of a perfect rank-1 projecting measurement device, the quantum communication protocols would run with the same fidelity with that device and a state with the amount of entanglement that is present in the measurement space as with the original imperfect device and a state with more entanglement
 \item Increasing or decreasing the entanglement of the original state respectively increases or decreases the maximum possible entanglement one can get in the measurement space, allowing for the unitary operations. This implies that the LOCC maps always decrease the maximum attainable amount of entanglement in the measurement space. Unfortunately this has only been proven for qubits.
\end{enumerate}
The above give us good reason to interpret the entanglement in the measurement space as being the operationally useful entanglement in the context of quantum communication protocols.

We also investigated the way measurement space entanglement can be applied to the mode entanglement. Here we found that mode entanglement is only non-zero in the measurement space when the total number of measurement outcomes is not a prime number. In the case of a completely isolated single particle, we find that due to the existence of only two possible outcomes, the measurement space entanglement always vanishes. However, introducing additional particles that would essentially increase the number of outcomes, makes it possible for the mode entanglement to obtain measurable consequences. \vspace{0.8cm}

\begin{acknowledgments}
 The author would like to thank Dieter Jaksch and Libby Heaney for helpful discussions and the EPSRC research council for funding this research.
\end{acknowledgments}

\bibliographystyle{unsrt}
\bibliography{Entanglement.bib}

\begin{thebibliography}{10}

\bibitem{NielsenChuang}
Michael~A. Nielsen and Isaac~L. Chuang.
\newblock {\em {Quantum Computation and Quantum Information}}.
\newblock {Cambridge University Press}, {2000}.

\bibitem{VonNeumann32}
John von Neumann.
\newblock {Mathematische Grundlagen der Quantenmechanik}.
\newblock {\em {Springer, Berlin}}, {1932}.

\bibitem{Tichy09}
Malte~Christopher Tichy, Fernando~de Melo, Marek Kus, Florian Mintert, and
  Andreas Buchleitner.
\newblock {Entanglement of Identical Particles and the Detection Process}.
\newblock {\em {Arxiv e-print 0902.1684v3}}, {2009}.

\bibitem{Wiseman03}
H.~M. Wiseman and John~A. Vaccaro.
\newblock {Entanglement of Indistinguishable Particles Shared between Two
  Parties}.
\newblock {\em {Physical Review Letters}}, {91}({9}), {2003}.

\bibitem{Nielsen99}
M.~A. Nielsen.
\newblock {Conditions for a Class of Entanglement Transformations}.
\newblock {\em {Physical Review Letters}}, {83}({2}):436, {1999}.

\bibitem{NielsenNotes2002}
Michael~A. Nielsen.
\newblock {An introduction to majorization and its applications to quantum
  mechanics}.
\newblock {Notes found on
  http://michaelnielsen.org/blog/talks/2002/maj/book.ps}, {October} {2002}.

\bibitem{Konrad08}
Thomas Konrad.
\newblock {Evolution equation for quantum entanglement}.
\newblock {\em {Nature Physics}}, {4}({4}):99, {2008}.

\bibitem{Wooters98}
William~K. Wootters.
\newblock {Entanglement of Formation of an Arbitrary State of Two Qubits}.
\newblock {\em {Physical Review Letters}}, {80}({10}):2245, {1998}.

\bibitem{Heaney06}
L.~Heaney, J.~Anders, and V.~Vedral.
\newblock {Spatial Entanglement of a Free Bosonic Field}, {2006}.

\bibitem{Heaney09-1}
Libby Heaney and Vlatko Vedral.
\newblock {Natural Mode Entanglement as a Resource for Quantum Communication}.
\newblock {\em {Physical Review Letters}}, {103}({20}):200502, {2009}.

\bibitem{Heaney09}
Libby Heaney and Janet Anders.
\newblock {Bell-inequality test of spatial mode entanglement of a single
  massive particle}.
\newblock {\em {Physical Review A}}, {80}:032104, {2009}.

\end{thebibliography}

\end{document}